\tikzset{
	block/.style = {draw, rectangle,
		minimum height=1cm,
		minimum width=2cm},
	input/.style = {coordinate,node distance=1cm},
	output/.style = {coordinate,node distance=4cm},
	arrow/.style={draw, -latex,node distance=2cm},
	pinstyle/.style = {pin edge={latex-, black,node distance=2cm}},
	sum/.style = {draw, circle, node distance=1cm},
}
\pgfplotsset{compat=newest}
\newtheorem{theorem}{Theorem}
\newtheorem{proposition}{Proposition}
\newtheorem{remark}{Remark}
\newtheorem{lemma}{Lemma}
\newtheorem{problem}{Problem}
\newtheorem{definition}{Definition}
\newcommand{\Z}{\ensuremath{{\mathbb Z}}}
\newcommand{\N}{\ensuremath{{\mathbb N}}}
\newcommand{\R}{\ensuremath{{\mathbb R}}}
\newcommand{\calT}{\ensuremath{\mathcal{T}}}
\newcommand{\calF}{\ensuremath{\mathcal{F}}}
\newcommand{\calI}{\ensuremath{\mathcal{I}}}
\DeclareMathOperator{\col}{col}
\newcommand{\ini}{\ensuremath{\mathrm{ini}}}
\def\BibTeX{{\rm B\kern-.05em{\sc i\kern-.025em b}\kern-.08em
		T\kern-.1667em\lower.7ex\hbox{E}\kern-.125emX}}
\def\endthebibliography{%
	\def\@noitemerr{\@latex@warning{Empty `thebibliography' environment}}%
	\endlist
}
\begin{document}
	
	\title{Stabilization by Controllers Having Integer Coefficients}
	\author{Joowon Lee, Donggil Lee, and Junsoo Kim
		\thanks{This work was supported by the National Research Foundation of Korea(NRF) grant funded by the Korea government(MSIT) (No. RS-2022-00165417 and No. RS-2024-00353032).}
		\thanks{J.~Lee is with ASRI, the Department of Electrical and Computer Engineering, Seoul National University, Seoul 08826, South Korea (e-mail: jwlee@cdsl.kr).}
		\thanks{D.~Lee is with the Department of Electrical Engineering, Incheon National University, Incheon 22012, South Korea (e-mail: dglee@inu.ac.kr).}
		\thanks{J.~Kim is with the Department of Electrical and Information Engineering, Seoul National University of Science and Technology, Seoul 01811, South Korea (e-mail: junsookim@seoultech.ac.kr).}}
	
	\maketitle
	
	\begin{abstract}
		
		The system property of ``having integer coefficients,'' that is, a transfer function has an integer monic polynomial as its denominator, is significant in the field of encrypted control as it is required for a dynamic controller to be realized over encrypted data.
		This paper shows that there always exists a controller with integer coefficients stabilizing a given discrete-time linear time-invariant plant.
		A constructive algorithm to obtain such a controller is provided, along with numerical examples.
		Furthermore, the proposed method is applied to converting a pre-designed controller to have integer coefficients, 
		while the original performance is preserved in the sense that
		the transfer function of the closed-loop system remains unchanged.
	\end{abstract}
	
	\begin{IEEEkeywords}
		Encrypted control, networked control system, Bézout's identity, stabilization, integer polynomial.
	\end{IEEEkeywords}
	
	\section{Introduction}\label{sec:intro}
	
	This paper addresses the classical problem of designing a controller that stabilizes a given plant, but under an additional constraint that the controller consists of integer coefficients.
	What we mean by \emph{consisting of integer coefficients} is that the denominator of a transfer function is an integer monic polynomial, that is, every coefficient of the denominator is an integer when its leading coefficient is one.
	To be specific, consider a single-input single-output (SISO) linear discrete-time plant described by a proper transfer function
	\begin{equation}\label{eq:plant}
		P(z)=\frac{N_p(z)}{D_p(z)},
	\end{equation}
	where the denominator $D_p(z)$ and the numerator $N_p(z)$ are coprime.
	Then,
	the problem is to find a controller
	\begin{equation}\label{eq:ctr}
		C(z)=\frac{N_c(z)}{D_c(z)}
	\end{equation}
	such that the polynomial
	\begin{equation}\label{eq:cl}
		D_p(z)D_c(z)-N_p(z)N_c(z)
	\end{equation}
	is Schur stable while $D_c(z)$ is an integer monic polynomial.
	
	The need for such controllers with integer coefficients has been prominent in the field of encrypted control \cite{CSM,ARC,MoritzARC,Kogiso,VC}, where modern cryptography is applied to networked control systems in a way that control operations are implemented directly over encrypted data.
	Major issues of encrypted control originate from the fact that
	such direct operations
	are in most cases restricted to addition and multiplication over integers.
In fact, it is well known that for a linear dynamic system to be realized over encrypted data, every element of its state matrix \emph{needs to be an integer} \cite{CDC18,ARC}.
For a SISO system, this is equivalent to the aforementioned constraint of having integer coefficients in the denominator of the transfer function.
Indeed, existing implementations of encrypted dynamic controllers without integer state matrices
heavily rely on additional resources,
such as
extra communication between the plant and the controller \cite{Kogiso,ElGamal,LQG,HEDD},
periodic reset of the controller \cite{reset},
and the bootstrapping technique \cite{NECSYS,bootstrap} which accompanies massive computational burden.

For this reason,
	the problem of finding a controller having an integer state matrix under performance guarantee has been extensively studied \cite{KimTAC23,KaoruTCNS,TSMC,CDC21,FIRapprox,Tavazoei22,Tavazoei23,Tavazoei23LCSS,Tava2026,CDC23}.
However, many of the previous results end up providing sufficient conditions that can only be satisfied by a limited class of plants
\cite{FIRapprox, Tavazoei22, Tavazoei23, Tavazoei23LCSS,CDC23,Tava2026}.
Such conditions include the strong stabilizability \cite{FIRapprox}, namely the existence of a stable controller that stabilizes the plant,
a property inherent to only a restricted class of systems \cite{ss}.
The methods in \cite{Tavazoei22,Tavazoei23} assume that certain algebraic integers exist based on
the plant's pole-zero configuration.
The approach in \cite{Tavazoei23LCSS}
involves a linear programming problem, yet is applicable if its solution meets a technical condition.
Even the recent work \cite{Tava2026}
merely provides
sufficient conditions
for the existence of a stabilizing controller
with integer coefficients.
While there exist methods applicable to a general class of plants,
their controllers
require additional input channels from the plant \cite{KimTAC23,KaoruTCNS,TSMC} or a time-varying implementation \cite{CDC21}.
To the best of our knowledge,
a generic stabilization method using a 
linear time-invariant (LTI) controller having integer coefficients
has not been established without additional assumptions.

In this paper, we show that
there always exists a controller consisting of integer coefficients that stabilizes a given LTI plant
without any assumption,
along with a constructive algorithm to find one.
Our approach is to first obtain a stabilizing controller, which does not have integer coefficients in general, and iteratively update this controller by increasing its order
so that it eventually consists of integer coefficients.
We provide an explicit upper bound on the number of these iterations.

Furthermore, the proposed method is applied to the \emph{conversion problem} \cite{KimTAC23,CDC21,FIRapprox,Tavazoei22,CDC23,KaoruTCNS,TSMC} stated as follows: Given a pre-designed controller, find an alternate controller having an integer state matrix that preserves the performance of the pre-designed controller.
Specifically, we solve the problem formulated in \cite{CDC23}, which aims to exactly preserve the transfer function from the reference signal to the plant output with respect to the closed-loop system.
Unlike in \cite{CDC23}, where this problem has
been solved for a restricted class of plants,
it is demonstrated that the principle of our method can be used to solve this conversion problem in general.

The rest of this paper is organized as follows.
Section~\ref{sec:pre} provides preliminaries and formulates the problem.
Section~\ref{sec:stabil} presents our main result,
a method to design a stabilizing controller having integer coefficients,
along with a numerical example.
In Section~\ref{sec:conv}, we address the conversion problem.
Finally, Section~\ref{sec:conclude} concludes the paper.

\textit{Notation:} The sets of integers, positive integers, and real numbers are denoted by $\Z$, $\N$, and $\R$, respectively.
The degree of a polynomial $a(z)$ is denoted by $\deg(a(z))$.
Define $\col\lbrace a_i\rbrace_{i=1}^n:=\left[a_n,\,\ldots,\,a_1\right]^\top$ for scalars $\lbrace a_i\rbrace_{i=1}^n$.
For $n\in\N$ and a polynomial $a(z)$
such that $\deg(a(z))\le n$,
define
\begin{equation}\label{eq:Toeplitz}
\calT_n(a(z)):=\scalebox{0.9}{$\begin{bmatrix}
		a_n & 0 & \cdots &  0\\
		a_{n-1} & a_n & \ddots & \vdots \\
		\vdots & \vdots &  \ddots & 0\\
		a_1 & a_2 & & a_n\\
		a_0 & a_1 &  &   a_{n-1}\\
		0 & a_0 & &  a_{n-2}\\
		\vdots & \ddots & \ddots & \vdots \\
		0 &\cdots & 0 & a_0
	\end{bmatrix}$}\in \R^{2n\times n},
	\end{equation}
	where $a_i$'s for $i=0,\,\ldots,\,\deg(a(z))$ are the $i$-th coefficients of $a(z)$ and $a_i=0$ for $i>\deg(a(z))$.
	We abuse notation and refer to \eqref{eq:Toeplitz} occasionally as $\calT_n(a)$, where $a=\col\lbrace a_i \rbrace_{i=0}^{n}\in\R^{n+1}$.
	The open ball of radius $r>0$ centered at $x\in\R^n$ according to the infinity norm is denoted by $B_r(x):=\lbrace y\in\R^n: \lVert y-x\rVert_\infty<r\rbrace$.
    Let $c^\ast$ denote the complex conjugate of a complex number $c$, and
	the zero vector of length $n$ is denoted by $0_n$.
	Let $\lVert \cdot \rVert$ denote the (induced) $1$-norm of a vector or a matrix. The ceiling and rounding operations are denoted by $\lceil \cdot \rceil$ and $\lceil \cdot \rfloor$, respectively.
	
	\section{Preliminaries and Problem Formulation}\label{sec:pre}
	
	\subsection{Preliminaries}\label{subsec:pre}
	
	We first review a method to solve the classical stabilization problem via
Bézout's identity,
and generalize the process into a mapping that is used throughout the paper.

It is well known that given a plant \eqref{eq:plant}, one can arbitrarily assign the closed-loop poles, i.e., the roots of the polynomial \eqref{eq:cl}, by designing a controller \eqref{eq:ctr}.
In terms of polynomials,
the following specifically holds:
Let coprime polynomials $D_p(z)$ and $N_p(z)$ be given and $n:=\deg(D_p(z))$.
Then,
for any polynomial $\gamma(z)$ of degree $2n$,
there exist polynomials $D_c(z)$ and $N_c(z)$
such that the polynomial \eqref{eq:cl} is equal to $\gamma(z)$ and $\deg(D_c(z))> \deg(N_c(z))$, where the latter ensures causality.
This can be shown as follows:
Since $D_p(z)$ and $N_p(z)$ are coprime, by Bézout's identity,
there exists a unique pair of $u(z)$ and $v(z)$ such that
\begin{equation}\label{eq:bezout}
u(z)D_p(z)+v(z)N_p(z)=1,
\end{equation}
$\deg(u(z))<\deg(N_p(z))$, and $\deg(v(z))<\deg(D_p(z))$.
By multiplying $\gamma(z)$ to both sides of \eqref{eq:bezout}, we obtain
\begin{multline*}
\underbrace{\left(u(z)\gamma(z)+d(z)N_p(z)\right)}_{=D_c(z)}D_p(z)\\+\underbrace{\left(v(z)\gamma(z)-d(z)D_p(z)\right)}_{=-N_c(z)}N_p(z)=\gamma(z),
\end{multline*}
where $d(z)$ is the quotient of $v(z)\gamma(z)$ divided by $D_p(z)$
and $-N_c(z)$ is the remainder.
Thus,
$\deg(N_c(z))<\deg(D_p(z))=n$, and hence $\deg(D_c(z))=n$.

This procedure can be interpreted as a way to generate a given polynomial from two coprime polynomials under some degree constraint.
Indeed, with a fixed $N_p(z)$, one is able to map a given pair of $D_p(z)$ and $\gamma(z)$ to $D_c(z)$.
The following definition generalizes such a mapping.

\begin{definition}\label{def:map}
Consider a polynomial $p(z)$, which is coprime to $N_p(z)$, and a polynomial $q(z)$, where $\deg(q(z))\geq \deg(p(z))$.
Then, the mapping $\calF$ is defined by $\calF:(p(z),q(z))\mapsto r(z)$,
where $r(z)$ satisfies
\begin{equation}\label{eq:map}
	p(z)r(z)+s(z)N_p(z)=q(z)
\end{equation}
for some polynomial $s(z)$ such that $\deg(s(z))<\deg(p(z))$.
\end{definition}

Note that the mapping $\calF$ is well-defined in the sense that the polynomial $r(z)$ of Definition~\ref{def:map} is uniquely determined, since $p(z)$ and $N_p(z)$ are coprime.

\subsection{Problem formulation}

The objective is to find a controller \eqref{eq:ctr} having integer coefficients that stabilizes the given plant \eqref{eq:plant}.
Namely, we find polynomials $D_c(z)$ and $N_c(z)$ such that \eqref{eq:cl} is Schur stable, $D_c(z)$ is an integer monic polynomial, and $\deg(N_c(z))<\deg(D_c(z))$, under the assumption that the polynomials $D_p(z)$ and $N_p(z)$ of \eqref{eq:plant} are coprime.
Without loss of generality, let $D_p(z)$ of \eqref{eq:plant} be a monic polynomial of degree $n$.
Then, the problem can be rewritten as follows.

\begin{problem}\label{prob:stabilization}
Find polynomials $\alpha(z)$, $\beta(z)$, and $\gamma(z)$ such that
\begin{equation}\label{eq:stabilization}
	\alpha(z)D_p(z)+\beta(z)N_p(z)=\gamma(z)
\end{equation}
and satisfy the followings:
\begin{enumerate}
	\item[(S1)] $\alpha(z)$ is an integer monic polynomial.
	\item[(S2)] $\gamma(z)$ is a Schur stable monic polynomial.
	\item[(S3)] $\deg(\beta(z))<\deg(\alpha(z))$.
\end{enumerate}
\end{problem}

By solving Problem~\ref{prob:stabilization}, a controller with integer coefficients can be constructed as
$D_c(z)=\alpha(z)$ and $N_c(z)=-\beta(z)$,
ensuring that it is strictly proper by (S3)
and the closed-loop system is stable by (S2).
Note that neither the roots of $\gamma(z)$ (the poles of the closed-loop system) nor the degree of $\alpha(z)$ (the order of the controller) is designated in advance.

Without loss of generality, we assume that $N_p(0)\neq 0$ for the rest of this paper.
Otherwise,
$N_p(z)$ can be factorized as
$N_p(z)=z^l\tilde{N}_p(z)$, where
$\tilde{N}_p(0)\neq 0$ and $l\in\N$.
Then, given a solution $(\alpha(z),\beta(z),\gamma(z))$
to Problem~\ref{prob:stabilization}
with respect to $D_p(z)$ and $\tilde{N}_p(z)$,
$(z^l\alpha(z),\beta(z),z^l\gamma(z))$ is
a solution to Problem~\ref{prob:stabilization}
with respect to $D_p(z)$ and $N_p(z)$.

\begin{remark}
Given a strictly proper plant \eqref{eq:plant},
one can design a controller \eqref{eq:ctr} with integer coefficients that is not necessarily strictly proper as follows;
let $(\alpha(z),\beta(z),\gamma(z))$ be a solution to Problem~\ref{prob:stabilization}
with respect to $(D_p(z),\tilde{\beta}(z)N_p(z))$,
where $\tilde{\beta}(z)$ is an
arbitrary degree-$1$ polynomial coprime to $D_p(z)$.
Then,
$\deg(N_c(z))\leq \deg(D_c(z))$
with
$D_c(z)=\alpha(z)$ and $N_c(z)=-\tilde{\beta}(z)\beta(z)$.
\end{remark}

\section{Main Result}\label{sec:stabil}

Now we propose a method to find a stabilizing controller having integer coefficients by solving Problem~\ref{prob:stabilization}.
The challenge of Problem~\ref{prob:stabilization} lies in the integer constraint (S1), since
finding $(\alpha(z),\beta(z),\gamma(z))$ without this condition is a mere stabilization problem as illustrated in Section~\ref{subsec:pre}.
Thus, our approach is to first find $(\alpha(z),\beta(z),\gamma(z))$ satisfying all other conditions of Problem~\ref{prob:stabilization} except (S1), and iteratively update it so that $\alpha(z)$ eventually becomes an integer polynomial.

We begin with a simple observation;
let $(\alpha(z),\beta(z),\gamma(z))$ satisfying \eqref{eq:stabilization} be given.
Then, by multiplying some polynomial $r(z)$ to both sides of \eqref{eq:stabilization},
$(\alpha(z),\beta(z),\gamma(z))$ can be updated to $(\alpha^+(z),\beta^+(z),\gamma^+(z))$ as
\begin{multline}\label{eq:pm}
\underbrace{\left(r(z)\alpha(z)+w(z)N_p(z)\right)}_{=\alpha^+(z)}D_p(z)\\+\underbrace{\left(r(z)\beta(z)-w(z)D_p(z)\right)}_{=\beta^+(z)}N_p(z)=\underbrace{r(z)\gamma(z)}_{=\gamma^+(z)}
\end{multline}
using some polynomial $w(z)$, while satisfying \eqref{eq:stabilization}.
Additionally, if $r(z)$ is Schur stable and monic, then so is $\gamma^+(z)$ given that $\gamma(z)$ satisfies (S2).
In this manner, we iteratively update the polynomials by properly selecting $r(z)$ and $w(z)$.

In what follows,
our framework of updating these polynomials is described in detail,
and then a method to solve Problem~\ref{prob:stabilization} under this framework is proposed.

\subsection{Proposed framework}\label{subsec:update}

Throughout the iterations, let $\alpha(z)$ be in the form of
\begin{equation}\label{eq:alpha}
\alpha(z)=z^Na(z)=z^{N+n}+a_{n-1}z^{N+n-1}+\cdots+a_0z^N
\end{equation}
for some $N\ge 0$ and a monic polynomial $a(z)$ of degree $n$, whose $i$-th coefficient is denoted by $a_i$.
This enables us to consider only the $n$-coefficients of the higher order terms, except the leading one.

We begin by finding $(\alpha^\ini(z),\beta^\ini(z),\gamma^\ini(z))$ that satisfies every condition of Problem~\ref{prob:stabilization} other than (S1).
As described in Section~\ref{sec:pre},
this can be done by first selecting
$\gamma^\ini(z)$ as a Schur stable monic polynomial of degree $2n$, and then
letting
\begin{equation}\label{eq:alpha ini}
\alpha^\ini(z)=\calF(D_p(z),\gamma^\ini(z)).
\end{equation}
This determines $\beta^\ini(z)$ by \eqref{eq:stabilization}, which also satisfies (S3) by Definition~\ref{def:map}.
Note that $\alpha^\ini(z)$ has the form of \eqref{eq:alpha} with $N=0$, as its degree is $n$.

As mentioned earlier, we select $r(z)$, a Schur stable monic polynomial, at each iteration.
Let the degree of $r(z)$ be $n$, and suppose the \emph{current} $(\alpha(z),\beta(z),\gamma(z))$ satisfies \eqref{eq:stabilization}, (S2), (S3), and \eqref{eq:alpha}.
Then, the \emph{next} $(\alpha^+(z),\beta^+(z),\gamma^+(z))$, which is defined by \eqref{eq:pm} with some polynomial $w(z)$, meets \eqref{eq:stabilization} and (S2).
Moreover, if
\begin{equation}\label{eq:deg w}
\deg(w(z))<\deg(\alpha(z)),
\end{equation}
then $\alpha^+(z)$ and $\beta^+(z)$ satisfy (S3)\footnote{
This is because $\deg(w(z)N_p(z))<\deg(r(z)\alpha(z))=\deg(\alpha^+(z))$
and
$\deg(r(z)\alpha(z))$ is greater than both $\deg(r(z)\beta(z))$ and $\deg(w(z)D_p(z))$.}.

Now it remains to make $\alpha^+(z)$ be in the form of \eqref{eq:alpha}.
It follows from \eqref{eq:deg w} that
$\deg(\alpha^+(z))=\deg(\alpha(z))+n$.
Thus, having $\alpha(z)=z^Na(z)$,
the next $\alpha^+(z)$ should be expressed as $z^{N+n}a^+(z)$
for some degree-$n$ monic polynomial $a^+(z)$.
In other words,
we need $w(z)$ such that
\begin{equation}\label{eq:w}
z^N a(z)r(z)+w(z)N_p(z)=z^{N+n}a^+(z)
\end{equation}
for some degree-$n$ monic polynomial $a^+(z)$,
as well as \eqref{eq:deg w} holds.
In fact, it directly follows from Definition~\ref{def:map} that such $w(z)$ is unique, since
\begin{equation*}
a^+(z)=\calF(z^{N+n},z^Na(z)r(z)).
\end{equation*}
This indicates that given $r(z)$ and the current $\alpha(z)=z^Na(z)$, the next $\alpha^+(z)=z^{N+n}a^+(z)$ is uniquely determined.

To further investigate the relation between $a(z)$, $a^+(z)$, and $r(z)$,
we represent them as the
\emph{coefficient vectors}, namely
\begin{equation*}
\mathrm{a}:=\col\lbrace a_i\rbrace_{i=0}^{n-1},\quad \mathrm{a}^+:=\col\lbrace a^+_i\rbrace_{i=0}^{n-1},\quad \mathrm{r}:=\col\lbrace r_i \rbrace_{i=0}^{n-1},
\end{equation*}
where $a_i^+$ and $r_i$ are the $i$-th coefficients of $a^+(z)$ and $r(z)$, respectively.
For analysis, we also define $\Delta(x)$ for $x\in\R^n$ by
\begin{equation*}
\Delta(x):=\Delta_1(x)-P_1P_2^{-1}\Delta_2(x),
\end{equation*}
where $P_1$, $P_2$, $\Delta_1(x)$, and $\Delta_2(x)$ are $n\times n$ matrices defined by
\begin{equation*}
\begin{bmatrix}
P_1\\ P_2
\end{bmatrix}:=\calT_n(N_p(z)),\quad
\begin{bmatrix}
\Delta_1(x)\\ \Delta_2(x)
\end{bmatrix}:=\calT_n\left(\begin{bmatrix}
1\\ x
\end{bmatrix}\right).
\end{equation*}
Note that $P_2$ is nonsingular due to $N_p(0)\neq 0$ because it is an upper triangular matrix whose main-diagonal elements are the constant term of $N_p(z)$.
With these definitions in hand, we provide the following proposition.

\begin{proposition}\label{prop:a update}
Let $a(z)$ and $r(z)$ be monic polynomials of degree $n$, and $N\ge 0$.
Then, a polynomial $a^+(z)$ satisfying \eqref{eq:w} with some polynomial $w(z)$ of degree less than $N+n$ is uniquely determined as $\mathrm{a}^+=\mathrm{a}+\Delta(\mathrm{a})\mathrm{r}$.
\end{proposition}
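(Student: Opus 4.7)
The plan is to establish existence and uniqueness of $a^+(z)$ via the mapping $\calF$, and then translate the polynomial identity \eqref{eq:w} into the matrix identity $\mathrm{a}^+=\mathrm{a}+\Delta(\mathrm{a})\mathrm{r}$ by splitting every polynomial into its leading monic term plus a degree-$(n{-}1)$ remainder and reading off coefficient vectors through the Toeplitz structure.

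First, I would observe that \eqref{eq:w} is exactly of the form of \eqref{eq:map} in Definition~\ref{def:map} with $p(z)=z^{N+n}$ and $q(z)=z^Na(z)r(z)$, because rewriting gives
\begin{equation*}
z^{N+n}a^+(z)+(-w(z))N_p(z)=z^Na(z)r(z),
\end{equation*}
with $z^{N+n}$ coprime to $N_p(z)$ (as $N_p(0)\neq 0$) and $\deg(z^Na(z)r(z))=N+2n\ge N+n=\deg(z^{N+n})$. Thus $a^+(z)=\calF(z^{N+n},z^Na(z)r(z))$ is uniquely determined. Comparing degrees and leading coefficients on the two sides of \eqref{eq:w}---the left has degree $N+2n$ with leading coefficient $1$ since $\deg(wN_p)<N+2n$---shows that $a^+(z)$ is necessarily monic of degree $n$.

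Second, I would decompose $a(z)=z^n+\tilde a(z)$, $r(z)=z^n+\tilde r(z)$, $a^+(z)=z^n+\tilde a^+(z)$, whose non-leading parts are $\tilde a,\tilde r,\tilde a^+$ of degree less than $n$ with coefficient vectors $\mathrm{a},\mathrm{r},\mathrm{a}^+$. Substituting into \eqref{eq:w} and cancelling the matching leading term $z^{N+2n}$ yields
\begin{equation*}
z^{N+n}\bigl(\tilde a^+(z)-\tilde a(z)-\tilde r(z)\bigr)=z^N\tilde a(z)\tilde r(z)+w(z)N_p(z).
\end{equation*}
Since the left side is divisible by $z^N$ and $N_p(z)$ is coprime to $z^N$, the factor $w(z)$ must be divisible by $z^N$; writing $w(z)=z^N\tilde w(z)$ with $\deg \tilde w<n$ and cancelling $z^N$ gives the clean identity
\begin{equation*}
z^ne(z)=\tilde a(z)\tilde r(z)+\tilde w(z)N_p(z),\qquad e(z):=\tilde a^+(z)-\tilde a(z)-\tilde r(z),
\end{equation*}
in which all polynomials involved have degree less than $2n$.

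Third, I would convert this identity to coefficient vectors. The Toeplitz matrix $\calT_n([1;\mathrm{a}])$ represents multiplication of a degree-$(n{-}1)$ polynomial by $z^n+\tilde a(z)$, and since $z^n\tilde r(z)$ contributes $\mathrm{r}$ to the top $n$ entries and $0$ to the bottom $n$, one reads off
\begin{equation*}
\Delta_1(\mathrm{a})\mathrm{r}=\mathrm{r}+\text{(top $n$ coefficients of $\tilde a(z)\tilde r(z)$)},\quad \Delta_2(\mathrm{a})\mathrm{r}=\text{(bottom $n$ coefficients of $\tilde a(z)\tilde r(z)$)},
\end{equation*}
while $\calT_n(N_p)\tilde w=[P_1\tilde w;\,P_2\tilde w]$ gives the top/bottom $n$-coefficient split of $\tilde w(z)N_p(z)$. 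Now the left-hand side $z^ne(z)$ has bottom $n$ coefficients equal to $0$ and top $n$ coefficients equal to $\mathrm{a}^+-\mathrm{a}-\mathrm{r}$. Matching the bottom blocks forces $\tilde w=-P_2^{-1}\Delta_2(\mathrm{a})\mathrm{r}$, which is well-defined since $P_2$ is nonsingular. Substituting this into the top-block equation yields
\begin{equation*}
\mathrm{a}^+-\mathrm{a}-\mathrm{r}=\bigl(\Delta_1(\mathrm{a})\mathrm{r}-\mathrm{r}\bigr)-P_1P_2^{-1}\Delta_2(\mathrm{a})\mathrm{r},
\end{equation*}
which rearranges to the claimed formula $\mathrm{a}^+=\mathrm{a}+\Delta(\mathrm{a})\mathrm{r}$.

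The main obstacle is bookkeeping: verifying that the top/bottom split of $\calT_n([1;\mathrm{a}])\mathrm{r}$ truly exposes $\mathrm{r}$ in the high part (from the leading $z^n$) and leaves only $\tilde a(z)\tilde r(z)$ in the low part, so that the elimination of $\tilde w$ through $P_2^{-1}$ reproduces exactly the operator $\Delta(\cdot)$ defined in the paper. Everything else reduces to unique solvability guaranteed by the coprimality of $z^{N+n}$ and $N_p(z)$ and a degree count.
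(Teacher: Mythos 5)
Your proof is correct and takes essentially the same route as the paper's: factor $z^N$ out of $w(z)$ using $N_p(0)\neq 0$, convert the resulting degree-$2n$ identity into Toeplitz coefficient form, solve the lower block for the auxiliary vector via $P_2^{-1}$, and substitute into the upper block to obtain $\mathrm{a}^+=\mathrm{a}+\Delta(\mathrm{a})\mathrm{r}$. The only cosmetic difference is that you subtract the leading monic terms at the polynomial level before vectorizing, whereas the paper keeps them inside a $(2n+1)$-row matrix identity and discards the trivial top row at the end.
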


\begin{proof}
Since $N_p(0)\neq 0$, $z^N$ is a divisor of $w(z)$ by \eqref{eq:w}. By letting $w(z)=z^Nb(z)$, it follows from \eqref{eq:w} that
\begin{equation}\label{eq:aplus}
a(z)r(z)+b(z)N_p(z)=z^na^+(z).
\end{equation}
We rewrite \eqref{eq:aplus} as
\begin{multline}\label{eq:mtx}
\scalebox{0.88}{$\begin{bmatrix}
		1 & 0 & \cdots & 0\\
		a_{n-1} & 1 & & \\
		\vdots & \vdots &\ddots & \\
		a_0 & a_1 & & 1\\
		\hline
		0 & a_0 & & a_{n-1}\\
		\vdots  & \ddots & \ddots& \vdots\\
		0 & \cdots & 0 & a_0
	\end{bmatrix}\begin{bmatrix}
		1\\ r_{n-1}\\ \vdots \\ r_0
	\end{bmatrix}$}
+\scalebox{0.88}{$\begin{bmatrix}
		0 & \cdots & 0\\
		0 & \cdots & 0\\
		p_{n-1} & & \\
		\vdots & \ddots & \\
		\hline
		p_0 & & p_{n-1}\\
		& \ddots & \vdots \\
		& & p_0
	\end{bmatrix}\begin{bmatrix}
		b_{n-1}\\ \vdots \\ b_0
	\end{bmatrix}$}\\
=\left[\begin{array}{cccc|ccc}
	1 & a^+_{n-1} & \cdots & a^+_0 & 0 & \cdots & 0
\end{array}\right]^\top,
\end{multline}
where
$b_i$ is the $i$-th coefficient of $b(z)$ for $i\leq\deg(b(z))$ and $b_i=0$ otherwise.
Similarly, $p_i$ denotes the $i$-th coefficient of $N_p(z)$ when $i\leq \deg(N_p(z))$, and otherwise, $p_i=0$.
Let $\mathrm{b}:=\col\lbrace b_i\rbrace_{i=0}^{n-1}$.
Then, \eqref{eq:mtx} is rewritten as
\begin{equation*}
\begin{bmatrix}
	1 & 0_n^\top\\
	\mathrm{a} & \Delta_1(\mathrm{a})\\
	0_n & \Delta_2(\mathrm{a})
\end{bmatrix}\begin{bmatrix}
	1\\ \mathrm{r}
\end{bmatrix}+\begin{bmatrix}
	0_n^\top\\ P_1\\ P_2
\end{bmatrix}\mathrm{b}=\begin{bmatrix}
	1\\ \mathrm{a}^+\\ 0_n
\end{bmatrix}.
\end{equation*}
As $\mathrm{b}=-P_2^{-1}\Delta_2(\mathrm{a})\mathrm{r}$ and $\mathrm{a}^+=\mathrm{a}+\Delta_1(\mathrm{a})\mathrm{r}+P_1\mathrm{b}$, the proof is concluded.
\end{proof}

With Proposition~\ref{prop:a update}, we are now able to interpret our update framework as the following dynamic system;
\begin{equation}\label{eq:dyn}
x_{k+1}=x_k+\Delta(x_k)u_k,
\end{equation}
where $x_k\in\R^n$ is the state and $u_k\in\R^n$ is the input.
Here, the initial state $x_0$ is determined by $\alpha^\ini(z)$.
We can regard the state space $\R^n$ as the space of monic polynomials with degree $n$, where a vector $v=\col\lbrace v_i\rbrace_{i=0}^{n-1}\in\R^n$ corresponds to the polynomial
\begin{equation*}
p_v(z):=z^n+v_{n-1}z^{n-1}+\cdots+v_1z+v_0,
\end{equation*}
and vice versa.
Thus, $p_{x_k}(z)=a(z)$ and $p_{u_k}(z)=r(z)$ at the $k$-th iteration.

Now our goal is to design the input $u_k$ to the system \eqref{eq:dyn} such that $p_{u_k}(z)$ is Schur stable at each $k$---recall that $r(z)$ should be Schur stable---and the state $x_k$ reaches some integer vector $x^\star\in\Z^n$ within a finite number of time steps.
Observe from \eqref{eq:dyn} that any point in $\R^n$ is reachable from $x_0$ if $\Delta(x_0)$ is invertible.
The following lemma gives a necessary and sufficient condition for the invertibility of $\Delta(x)$.

\begin{lemma}\label{lem:invert}
For any $x\in\R^n$, $\Delta(x)$ is invertible if and only if $p_x(z)$ and $N_p(z)$ are coprime.
\end{lemma}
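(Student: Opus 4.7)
The plan is to introduce the $2n\times 2n$ ``Sylvester-like'' matrix
\begin{equation*}
M(x):=\begin{bmatrix}\Delta_1(x) & P_1\\ \Delta_2(x) & P_2\end{bmatrix}=\bigl[\,\calT_n(p_x(z))\;\big|\;\calT_n(N_p(z))\,\bigr],
\end{equation*}
and then to relate the invertibility of $\Delta(x)$ to that of $M(x)$ via a Schur complement argument, while identifying invertibility of $M(x)$ with coprimeness of $p_x(z)$ and $N_p(z)$ via a standard resultant-type argument. Two independent steps, then a combination.

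First I would verify the polynomial interpretation of $M(x)$. By construction of $\calT_n(\cdot)$, for any vectors $c,d\in\R^n$ regarded as coefficient vectors of polynomials $c(z),d(z)$ of degree less than $n$, the product $M(x)\bigl[\begin{smallmatrix}c\\ d\end{smallmatrix}\bigr]$ is precisely the coefficient vector of the polynomial $p_x(z)c(z)+N_p(z)d(z)$. This allows me to characterize $\ker M(x)$: if $p_x(z)$ and $N_p(z)$ are coprime, then $p_x(z)c(z)=-N_p(z)d(z)$ forces $p_x(z)\mid d(z)$, and since $\deg d(z)<n=\deg p_x(z)$ we get $d(z)=0$ and hence $c(z)=0$; conversely, if $h(z):=\gcd(p_x(z),N_p(z))$ has positive degree, then $(c(z),d(z))=\bigl(N_p(z)/h(z),\,-p_x(z)/h(z)\bigr)$ yields a nontrivial kernel element whose polynomial degrees are strictly less than $n$. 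Thus $M(x)$ is invertible if and only if $p_x(z)$ and $N_p(z)$ are coprime.

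Next I would connect $M(x)$ and $\Delta(x)$ through the Schur complement. Because $P_2$ is invertible (as already noted in the paper, being upper-triangular with diagonal $N_p(0)\neq 0$), the standard block-determinant identity gives
\begin{equation*}
\det M(x)=\det P_2\cdot\det\bigl(\Delta_1(x)-P_1 P_2^{-1}\Delta_2(x)\bigr)=\det P_2\cdot\det\Delta(x).
\end{equation*}
Since $\det P_2\neq 0$, this identity shows $\Delta(x)$ is invertible if and only if $M(x)$ is invertible. Combining with the previous step completes the equivalence claimed in the lemma.

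The main (essentially only) subtlety will be the kernel analysis of $M(x)$ in the case where $\deg N_p(z)<n$, so that $M(x)$ is not literally the standard Sylvester matrix of two degree-$n$ polynomials. I would handle this by working directly with the polynomial identity $p_x(z)c(z)+N_p(z)d(z)=0$ and the bounds $\deg c(z),\deg d(z)<n$, which are all that is needed for the divisibility argument; the degree-deficiency of $N_p(z)$ does not affect the conclusion, since the cofactor $N_p(z)/h(z)$ still has degree strictly less than $n$. Beyond this, every step is routine.
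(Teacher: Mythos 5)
Your proposal is correct and follows essentially the same route as the paper: the paper likewise forms the block matrix $\Gamma(x)=\bigl[\,\calT_n(p_x(z))\mid\calT_n(N_p(z))\,\bigr]$, identifies $\Delta(x)$ as the Schur complement of $P_2$ so that invertibility of $\Delta(x)$ and of $\Gamma(x)$ are equivalent, and characterizes a nontrivial kernel of $\Gamma(x)$ by the existence of nonzero polynomials $u(z),v(z)$ of degree less than $n$ with $u(z)p_x(z)+v(z)N_p(z)=0$, settled by the same divisibility-versus-degree argument. The only cosmetic difference is that you phrase the Schur-complement step via the block determinant identity rather than directly; the content is identical.
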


\begin{proof}
Since $\Delta(x)$ is the Schur complement of the block $P_2$ of
\begin{equation*}
\Gamma(x):=\begin{bmatrix}
	\Delta_1(x) & P_1\\
	\Delta_2(x) & P_2
\end{bmatrix}\in\R^{2n\times 2n},
\end{equation*}
$\Delta(x)$ is invertible if and only if $\Gamma(x)$ is invertible.
It can be shown that there exists a nonzero vector $[u^\top,\,v^\top]^\top\in\R^{2n}$, where
$u\in\R^n$ and $v\in\R^n$, such that
\begin{equation*}
\Gamma(x)\begin{bmatrix}
	u\\ v
\end{bmatrix}=\begin{bmatrix}
	\calT_n\left(\begin{bmatrix}
		1\\ x
	\end{bmatrix}\right) & \calT_n(N_p(z))
\end{bmatrix}\begin{bmatrix}
	u\\ v
\end{bmatrix}=0_{2n},
\end{equation*}
if and only if
there exist nonzero polynomials $u(z)$ and $v(z)$ such that
\begin{equation*}
u(z)p_x(z)+v(z)N_p(z)=0,
\end{equation*}
$\deg(u(z))<n$, and $\deg(v(z))<n$.
We show that this is equivalent to $p_x(z)$ and $N_p(z)$ not being coprime.
If $p_x(z)$ and $N_p(z)$ are not coprime, then the existence of such $u(z)$ and $v(z)$ is trivial.
Conversely, if $p_x(z)$ and $N_p(z)$ are coprime, then $p_x(z)$ should divide $v(z)$, which contradicts the fact that $\deg(p_x(z))>\deg(v(z))$.
Therefore, $\Gamma(x)$ is invertible if and only if $p_x(z)$ and $N_p(z)$ are coprime.
\end{proof}

It is easily verified that if we select $\gamma^\ini(z)$ that is coprime to $N_p(z)$, then $p_{x_0}(z)=\alpha^\ini(z)$ is also coprime to $N_p(z)$.
Thus, with $u_0=\Delta(x_0)^{-1}(x^\star-x_0)$, the state reaches any $x^\star\in\Z^n$ in one time step.
However, this may lead to $p_{u_0}(z)$ that is not Schur stable.

In this regard,
we use the fact that $p_u(z)$ is Schur stable for any $u\in\R^n$ if $\lVert u\rVert<1$ by Rouché's theorem\footnote{Consider a monic polynomial $p(z)=z^n+v_{n-1}z^{n-1}+\cdots+v_0$. If $\lvert v_{n-1}z^{n-1}+\cdots+v_0\rvert<\lvert z^n\rvert$ for $\lvert z\rvert=1$, then $z^n$ and $p(z)$ have the same number of zeros inside the unit circle.} \cite{complex}.
Therefore, in the next subsection,
we move $x_k$ gradually to an integer vector $x^\star$ within the area where $p_x(z)$ is coprime to $N_p(z)$, using a bounded input.

\subsection{Solution to Problem~\ref{prob:stabilization}}\label{subsec:dyn}

We first specify a domain in which the state $x_k$ of \eqref{eq:dyn} can evolve while keeping $\Delta(x_k)$ invertible,
and then show that there exists an integer vector $x^\star$, which is our destination, inside this domain.
Accordingly, the input $u_k$ is designed so that the state reaches $x^\star$ after a finite number of time steps.

To this end, we factorize $N_p(z)$ as
\begin{equation*}
N_p(z)=c\prod_{j=1}^{n_r}\left(z-\lambda_j\right)\prod_{j=1}^{n_c}\left(z-\eta_j\right)\left(z-\eta^\ast_j\right),
\end{equation*}
where $c\in\R$,
$\lambda_j$'s are the real roots, and $\eta_j$'s are the complex non-real roots.
Then, for $j=1,\,2,\,\ldots,\,n_r$,
the set of degree-$n$ monic polynomials having $\lambda_j$ as their root corresponds to the hyperplane
\begin{equation}\label{eq:real}
\left\lbrace x\in\R^n: p_x(\lambda_j)=0\right\rbrace=\left\lbrace x\in\R^n: \phi_j^\top x=\psi_j\right\rbrace,
\end{equation}
where $\phi_j\in\R^n$ and $\psi_j\in\R$ are defined by
\begin{equation*}
\left[-\psi_j,\,\phi_j^\top\right]:=\left[\lambda_j^n,\,\lambda_j^{n-1},\,\ldots,\,1\right].
\end{equation*}

Similarly, for $j=1,\,2,\,\ldots,\,n_c$,
the set of $x\in\R^n$ such that $p_x(z)$ has both $\eta_j$ and $\eta_j^\ast$ as its roots is
\begin{equation*}
\left\lbrace x\in\R^n: \phi_{n_r+2j-1}^\top x=\psi_{n_r+2j-1}\,\,\text{and}\,\,\phi_{n_r+2j}^\top x=\psi_{n_r+2j}\right\rbrace
\end{equation*}
where
\begin{multline*}
\begin{bmatrix}
-\psi_{n_r+2j-1} & \phi_{n_r+2j-1}^\top\\
-\psi_{n_r+2j} & \phi_{n_r+2j}^\top
\end{bmatrix}\\:=\frac{1}{2}\begin{bmatrix}
1 & 1\\ -i & i
\end{bmatrix}\begin{bmatrix}
\eta_j^n & \eta_j^{n-1} & \cdots & 1\\
\left(\eta_j^\ast\right)^n & \left(\eta_j^\ast\right)^{n-1} & \cdots & 1
\end{bmatrix}\in\R^{2\times (n+1)}
\end{multline*}
with $i^2=-1$.
By definition, the real and the imaginary parts of $p_x(\eta_j)$ are $\phi_{n_r+2j-1}^\top x-\psi_{n_r+2j-1}$ and $\phi_{n_r+2j}^\top x-\psi_{n_r+2j}$, respectively.

\begin{figure}
\centering
\includegraphics[width=0.28\textwidth]{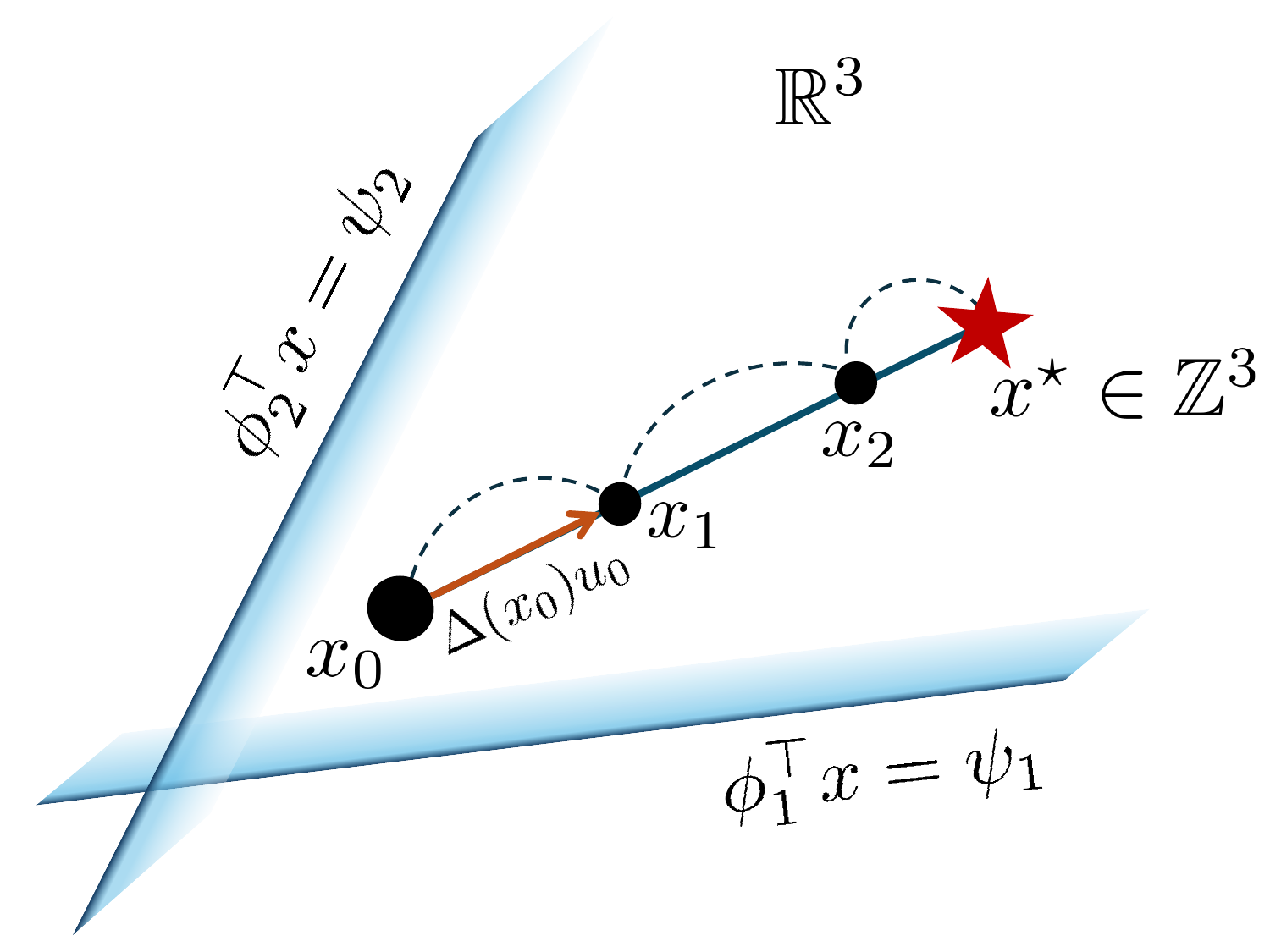}
\caption{
Illustration of the state $x_k$, the destination $x^\star$, and the hyperplanes \eqref{eq:hyperplane}
when $n=3$, $n_r=2$, and $n_c=0$.
}
\label{fig}
\end{figure}

Therefore, by Lemma~\ref{lem:invert},
as long as the state $x_k$ avoids the hyperplanes
\begin{equation}\label{eq:hyperplane}
\left\lbrace x\in\R^n: \phi_t^\top x=\psi_t\right\rbrace\,\,\,\text{for}\,\,\, t=1,\,2,\,\ldots,\,n_r+2n_c,
\end{equation}
$\Delta(x_k)$ is invertible.
We aim to find $x^\star\in\Z^n$ such that the line segment
\begin{equation*}
\Omega:=\left\lbrace \rho x_0+(1-\rho)x^\star: \rho\in[0,1]\right\rbrace.
\end{equation*}
does not \emph{cross} these hyperplanes, so that $\Delta(x)$ is invertible for any $x\in\Omega$, as depicted in Fig.~\ref{fig}.
In other words, $x^\star$ should be on the \emph{same side} of each hyperplane as the initial state $x_0$.
This condition is equivalent to
\begin{equation}\label{eq:xstar}
\left( \phi_t^\top x_0-\psi_t\right) \left( \phi_t^\top x^\star-\psi_t\right)>0\quad \forall t\in\calI,
\end{equation}
where $\calI:=\lbrace t: \phi_t^\top x_0-\psi_t\neq 0\rbrace$.

We introduce $\calI$ to account for the fact that
even when $p_{x_0}(z)$ and $N_p(z)$ are coprime,
one of the real and the imaginary parts of $p_{x_0}(\eta_j)$ can be zero for some $j=1,\,2,\,\ldots,\,n_c$,
and hence
there may exist $t\in (n_r,n_r+2n_c]$ such that $\phi_t^\top x_0-\psi_t=0$.
Nevertheless,
the existence of $x^\star\in\Z^n$ satisfying \eqref{eq:xstar} is ensured by the following proposition.

\begin{proposition}\label{prop:xstar}
Given $x_0\in\R^n$ such that $p_{x_0}(z)$ and $N_p(z)$ are coprime, there exists $x^\star\in\Z^n$ satisfying \eqref{eq:xstar}.
\end{proposition}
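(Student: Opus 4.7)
The plan is to construct $x^\star$ in the explicit form $x^\star = Mv$, where $v \in \Z^n$ is a direction to be chosen and $M \in \N$ is taken sufficiently large. Writing $g_t := \operatorname{sgn}(\phi_t^\top x_0 - \psi_t) \in \{-1,+1\}$ for $t \in \calI$, the conditions in \eqref{eq:xstar} become
\[
M\,g_t \phi_t^\top v > g_t \psi_t, \qquad t \in \calI .
\]
Therefore, once $v \in \Z^n$ has been found so that $g_t \phi_t^\top v > 0$ for every $t \in \calI$, it suffices to take any integer $M > \max_{t \in \calI} g_t \psi_t/(g_t \phi_t^\top v)$, and $x^\star := Mv \in \Z^n$ satisfies \eqref{eq:xstar}.

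The main step is the construction of such a $v$. I would first exhibit $v \in \R^n$ with $g_t \phi_t^\top v > 0$ for every $t \in \calI$, then transfer it into $\Z^n$. The key ingredient is linear independence of the vectors $\{\phi_t\}_{t=1}^{n_r + 2n_c}$ in $\R^n$. To see this, discard coinciding roots of $N_p(z)$ (repeated roots only duplicate both $\phi_t$ and $g_t$, so they give identical constraints); each remaining distinct real root contributes a Vandermonde row $[\lambda_j^{n-1},\ldots,1]^\top$, and each complex-conjugate pair contributes the real and imaginary parts of a complex Vandermonde row $[\eta_j^{n-1},\ldots,1]^\top$. Using $\operatorname{Re}(w) = (w + w^\ast)/2$ and $\operatorname{Im}(w) = (w - w^\ast)/(2i)$, the real span of the $\phi_t$'s equals the complex span of the underlying complex Vandermonde rows, whose dimension is $n_r + 2n_c \leq \deg(N_p) \leq n$ by the usual Vandermonde argument. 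Thus $\{\phi_t\}_{t \in \calI}$ is linearly independent, so the linear system $\phi_t^\top v = g_t$, $t \in \calI$, is solvable in $\R^n$, producing $g_t \phi_t^\top v = 1 > 0$ throughout.

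Passing from $\R^n$ to $\Z^n$ is routine: the strict inequalities $g_t \phi_t^\top v > 0$ define an open subset of $\R^n$, so we may first perturb $v$ to have rational entries and then multiply by a common denominator to obtain $v \in \Z^n$ with the same positivity. The main obstacle I anticipate is organizing the linear-independence argument carefully in the mixed real/complex case (and when $N_p(z)$ has repeated roots); once this is in place, the rest of the proof is a direct combination of linear algebra, density of $\mathbb{Q}$ in $\R$, and the large-$M$ dominance estimate displayed above.
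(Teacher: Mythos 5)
Your proof is correct, but it takes a genuinely different route from the paper's. The paper exploits the fact that all the hyperplanes \eqref{eq:hyperplane} pass through the common point $v$ with $p_v(z)=z^{n-n_r-2n_c}N_p(z)/c$, so the set $\Phi$ of points satisfying \eqref{eq:xstar} is invariant under dilations centered at $v$; dilating the small ball $B_\delta(x_0)\subset\Phi$ by the factor $1/\delta$ about $v$ yields an infinity-norm ball of radius $1$ inside $\Phi$, which necessarily contains an integer point. You instead observe that the normals $\lbrace\phi_t\rbrace_{t\in\calI}$, being real and imaginary parts of Vandermonde rows at at most $\deg(N_p(z))\le n$ distinct nodes, are linearly independent, solve $\phi_t^\top v=g_t$ to obtain a direction lying strictly inside the recession cone of $\Phi$, rationalize and clear denominators, and take $x^\star=Mv$ for $M$ sufficiently large. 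Both arguments are sound: your bookkeeping for repeated roots (which only duplicate constraints) and for conjugate pairs (real/complex span equivalence) checks out, and the degenerate case $\calI=\emptyset$ is harmless. The paper's route is shorter, since it needs only the nonemptiness of the intersection of the hyperplanes rather than the independence of their normals, and it places $x^\star$ near the scaled copy of $x_0$, which matters in practice because the iteration bound $\bar{T}$ of Proposition~\ref{prop:finite} grows with $\lVert x^\star-x_0\rVert$. Your route buys an explicit construction and the stronger structural fact that $\Phi$ has a full-dimensional recession cone, at the cost of the Vandermonde lemma and a potentially much larger $\lVert x^\star-x_0\rVert$.
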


\begin{proof}
Define $\Phi\subset \R^n$ as
\begin{equation}\label{eq:Phi}
\Phi:=\lbrace x\in\R^n: (\phi_t^\top x_0-\psi_t)(\phi_t^\top x-\psi_t)>0\quad\forall t\in\calI\rbrace.
\end{equation}
We show that there exists a ball of radius $1$ according to the infinity norm, which always contains an integer vector, inside $\Phi$.
Since $\Phi$ is open, there exists $\delta>0$ such that $B_\delta(x_0)\subset \Phi$.
Consider $v\in\R^n$ such that $p_v(z)=z^{n-n_r-2n_c}N_p(z)/c$.
We show that $\mathcal{B}_1:=B_1((x_0-v)/\delta+v)\subset \Phi$.
For any $y\in \mathcal{B}_1$,
it can be verified that $\delta(y-v)+v\in B_\delta(x_0)\subset \Phi$.
Then,
\begin{multline*}
\left(\phi_t^\top x_0-\psi_t\right)\left(\phi_t^\top\left(\delta(y-v)+v\right)-\psi_t\right)\\
=\delta\left(\phi_t^\top x_0-\psi_t\right)\left(\phi_t^\top y-\psi_t\right)>0\quad \forall t\in\calI,
\end{multline*}
since $\phi_t^\top v=\psi_t$ for all $t\in \calI$ by the definition of $v$.
Therefore, $y\in \Phi$, which concludes the proof.
\end{proof}

Seeing Fig.~\ref{fig}, one may think of a case when the hyperplanes are parallel and close to each other, in a way that $x^\star\in\Z^n$ does not exist.
However, the intersection of the hyperplanes \eqref{eq:hyperplane} is not empty, since there always exists a monic polynomial of degree $n$ having every root of $N_p(z)$ as its root.
This plays a key role in the proof of Proposition~\ref{prop:xstar}.

Having the destination $x^\star\in\Z^n$, we design the input as
\begin{equation}\label{eq:u}
u_k=\scalebox{0.95}{$\displaystyle\begin{cases}
	\Delta(x_k)^{-1}\left(x^\star-x_k \right), & \text{if}\,\left\lVert \Delta(x_k)^{-1}\left(x^\star-x_k \right)\right\rVert<1,\\
	\displaystyle
	\frac{\mu\Delta(x_k)^{-1}\left(x^\star-x_k \right)}{\left\lVert\Delta(x_k)^{-1}\left(x^\star-x_k \right)\right\rVert}, & \text{otherwise,}
\end{cases}$}
\end{equation}
with some $\mu\in(0,1)$.
By design, $\lVert u_k\rVert<1$ for all $k\ge 0$.
Recall that this ensures $p_{u_k}(z)$ to be Schur stable.

As depicted in Fig.~\ref{fig}, the input \eqref{eq:u} drives the state towards $x^\star$ along the line segment $\Omega$.
Since $x^\star$ is located so that $\Delta(x)$ is invertible for any $x\in\Omega$,
the input \eqref{eq:u} is well-defined.
Moreover, it is guaranteed that the state reaches $x^\star$ in a finite number of time steps,
as stated in the following proposition.

\begin{proposition}\label{prop:finite}
Suppose that the system \eqref{eq:dyn} has an initial state $x_0\in\R^n$ such that $p_{x_0}(z)$ and $N_p(z)$ are coprime.
Then, given $x^\star\in\Z^n$ satisfying \eqref{eq:xstar},
the input \eqref{eq:u} is well-defined for all $k\ge 0$ and achieves
$x_T=x^\star$ for some $T\ge 0$ such that
\begin{equation*}
	T\leq \left\lceil \frac{\sigma}{\mu}\left\lVert x^\star-x_0\right\rVert\right\rceil=:\bar{T},
\end{equation*}
where $\sigma:=\max_{x\in\Omega}\lVert \Delta^{-1}(x)\rVert$.
\end{proposition}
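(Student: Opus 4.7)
The plan is to prove both claims together by induction on $k$: each iterate $x_k$ stays on the segment $\Omega$, so that $\Delta(x_k)$ is invertible and $u_k$ is well-defined, while the distance $\delta_k := \lVert x_k - x^\star\rVert$ decreases by at least $\mu/\sigma$ at every step preceding termination.

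First I would check that $\Delta(y)$ is invertible for every $y \in \Omega$, which also ensures that $\sigma = \max_{x\in\Omega}\lVert\Delta^{-1}(x)\rVert$ is finite. By Lemma~\ref{lem:invert} this reduces to coprimality of $p_y(z)$ and $N_p(z)$ on $\Omega$. For each real root $\lambda_j$ of $N_p(z)$, the index $t=j$ lies in $\calI$ because $p_{x_0}(\lambda_j) \neq 0$ by hypothesis, so \eqref{eq:xstar} forces $\phi_t^\top x_0 - \psi_t$ and $\phi_t^\top x^\star - \psi_t$ to share a common nonzero sign, and then their convex combination along $\Omega$ cannot vanish. For a pair of complex conjugate roots $\eta_j, \eta_j^\ast$, coprimality at $x_0$ guarantees that at least one of the indices $n_r+2j-1, n_r+2j$ lies in $\calI$, and the same convex-combination argument forbids $p_y(\eta_j)$ from vanishing on $\Omega$.

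Next I would verify $x_k \in \Omega$ by induction. The base case is immediate. In the ``if'' branch one has $u_k = \Delta(x_k)^{-1}(x^\star - x_k)$, giving $x_{k+1} = x^\star \in \Omega$. In the ``else'' branch, writing $d_k := \lVert\Delta(x_k)^{-1}(x^\star - x_k)\rVert \geq 1$, a short computation gives $x_{k+1} = (1-\mu/d_k)x_k + (\mu/d_k)x^\star$, a convex combination of $x_k$ and $x^\star$ because $\mu \in (0,1)$ and $d_k \geq 1$. Hence $x_{k+1} \in \Omega$, and invertibility of $\Delta$ on $\Omega$ makes $u_k$ well-defined for every $k$.

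For the termination bound, the ``else'' update satisfies $\delta_{k+1} = \delta_k(1-\mu/d_k)$, and combining with $d_k \leq \lVert\Delta(x_k)^{-1}\rVert\,\delta_k \leq \sigma\delta_k$ yields $\delta_k - \delta_{k+1} \geq \mu/\sigma$. If the process terminates at iteration $T-1$ via the ``if'' branch, it has performed $T-1$ preceding ``else'' steps with $\delta_{T-1} > 0$, so that $(T-1)\mu/\sigma < \delta_0$; since $T$ is an integer, this forces $T \leq \lceil \sigma\delta_0/\mu \rceil = \bar T$. The main subtlety, and the part I expect to need the most care, is the coprimality argument above: for complex conjugate root pairs, non-coprimality corresponds to the \emph{intersection} of two hyperplanes rather than each one individually, so one must exploit that at least one of the associated indices lying in $\calI$ suffices, together with \eqref{eq:xstar}, to keep $p_y(\eta_j)$ nonzero all along $\Omega$.
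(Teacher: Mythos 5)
Your proposal is correct and follows essentially the same route as the paper: invertibility of $\Delta$ along $\Omega$ via convexity of $\Phi$ and the sign condition \eqref{eq:xstar} (including the correct handling of complex conjugate root pairs, where at least one of the two associated indices lies in $\calI$), induction to keep $x_k\in\Omega$, and a per-step progress bound of $\mu/\sigma$ to obtain $T\leq\bar T$. The only cosmetic difference is bookkeeping—you track the remaining distance $\lVert x_k-x^\star\rVert$ while the paper tracks the distance traveled $\lVert x_{\bar T}-x_0\rVert$ and derives a contradiction—and your argument in fact fixes a small typo in the paper's proof ("at most one" should read "at least one" of $n_r+2j-1$, $n_r+2j$ in $\calI$).
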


\begin{proof}
Since $p_{x_0}(z)$ and $N_p(z)$ are coprime, $\lbrace 1,\,\ldots,\,n_r\rbrace\subset\calI$ by \eqref{eq:real}.
For $j=1,\,\ldots,\,n_c$, at most one of $n_r+2j-1$ and $n_r+2j$ is in $\calI$.
Then, for any $x\in\Phi$, where $\Phi$ is defined by \eqref{eq:Phi}, $p_x(z)$ is coprime to $N_p(z)$.
As $\Phi$ is convex, $\Omega\subset \Phi$, and hence $\Delta(x)$ is invertible for all $x\in\Omega$ by Lemma~\ref{lem:invert}.
Thus, $0<\sigma<\infty$ since $\lVert \Delta^{-1}(x)\rVert$ is a well-defined continuous function on $\Omega$, which is compact.
In addition, it can be verified by induction that $x_k\in\Omega$ and the input \eqref{eq:u} is well-defined for all $k\ge 0$.
If there exists $k\in[0,\bar{T})$ such that $\lVert \Delta(x_k)^{-1}(x^\star-x_k)\rVert < 1$, then $x_{k+1}=x^\star$ by \eqref{eq:u} and the proof ends.
Suppose otherwise that $\lVert \Delta(x_k)^{-1}(x^\star-x_k)\rVert \geq 1$ for all $k\in[0,\bar{T})$.
Then,
\begin{equation}\label{eq:proppf}
	x_{k+1}=x_k+\frac{\mu}{\left\lVert \Delta(x_k)^{-1}\left(x^\star-x_k\right)\right\rVert}\left(x^\star-x_k\right),
\end{equation}
and thus
\begin{equation}\label{eq:proppf2}
	x_{k+1}\in \left\lbrace \rho x_k+(1-\rho)x^\star: \rho\in (0,1) \right\rbrace
\end{equation}
for all $k\in[0,\bar{T})$,
since $\mu/\lVert \Delta(x_k)^{-1}(x^\star-x_k)\rVert\in (0,1)$.
It follows from \eqref{eq:proppf} that
\begin{equation*}
	\left\lVert x_{k+1}-x_k\right\rVert=\frac{\mu\left\lVert x^\star-x_k\right\rVert}{\left\lVert \Delta(x_k)^{-1}\left(x^\star-x_k\right)\right\rVert}\geq \frac{\mu}{\sigma}
\end{equation*}
for all $k\in [0,\bar{T})$.
Since \eqref{eq:proppf2} implies that $x_1,\,x_2,\,\ldots,\,x_{\bar{T}}$ are located sequentially in line between $x_0$ and $x^\star$,
\begin{equation*}
	\left\lVert x_{\bar{T}}-x_0\right\rVert\geq \frac{\mu}{\sigma}\left\lceil \frac{\sigma}{\mu}\left\lVert x^\star-x_0\right\rVert\right\rceil\geq \left\lVert x^\star-x_0\right\rVert,
\end{equation*}
which contradicts the fact that $x_{\bar{T}}\in\Omega$ and $x_{\bar{T}}\neq x^\star$ by \eqref{eq:proppf2}.
This concludes the proof.
\end{proof}

\subsection{Overall procedure and main theorem}\label{subsec:main}

\begin{algorithm}
\caption{Solving Problem~\ref{prob:stabilization}.}\label{alg:stabil}
\begin{algorithmic}[1]
	\renewcommand{\algorithmicrequire}{\textbf{Input:}}
	\Require $D_p(z),\,N_p(z)$.
	\State $n\gets \deg(D_p(z))$.
	\State Choose a Schur stable monic polynomial $\gamma^\ini(z)$ of degree $2n$ that is coprime to $N_p(z)$. $\gamma(z)\gets \gamma^\ini(z)$.
	\State Perform \eqref{eq:alpha ini}. $N\gets 0$.
	\State
	Let $x_0\in\R^n$ such that $p_{x_0}(z)=\alpha^\ini(z)$. $k\gets 0$.
	\State Find $x^\star\in\Z^n$ satisfying \eqref{eq:xstar}. Select $\mu\in (0,1)$.
	\While{$x_k\neq x^\star$}\label{step:while start}
	\State Perform \eqref{eq:u}.
	\State Perform \eqref{eq:dyn}.
	\State $\gamma(z)\gets p_{u_k}(z)\gamma(z)$, $N\gets N+n$, $k\gets k+1$.
	\EndWhile\label{step:while end}
	\State $\alpha(z)\gets z^Np_{x^\star}(z)$.
	\State $\beta(z)\gets (\gamma(z)-\alpha(z)D_p(z))/N_p(z)$.
	\renewcommand{\algorithmicrequire}{\textbf{Output:}}
	\Require $\alpha(z),\,\beta(z),\,\gamma(z)$.
\end{algorithmic}
\end{algorithm}

In summary,
we start from $(\alpha^\ini(z),\beta^\ini(z),\gamma^\ini(z))$ that satisfies all other conditions of Problem~\ref{prob:stabilization} except (S1),
and iteratively update $(\alpha(z),\beta(z),\gamma(z))$ as in \eqref{eq:pm} so that it eventually achieves (S1).
We emphasize that $\alpha(z)$ corresponds to the state of \eqref{eq:dyn}, and $r(z)$, which we select at each iteration, is determined by the input \eqref{eq:u}.
Proposition~\ref{prop:finite} guarantees that $\alpha(z)$ becomes an integer polynomial after a finite number of iterations, so that Problem~\ref{prob:stabilization} is solved.
The overall procedure is presented as Algorithm~\ref{alg:stabil}.

From what we have discussed,
it is derived that Algorithm~\ref{alg:stabil} returns a solution to Problem~\ref{prob:stabilization} for any given plant \eqref{eq:plant}, and thus a stabilizing controller with integer coefficients can be found; this is stated in the following theorem.

\begin{theorem}\label{thm:stabil}
Given a plant \eqref{eq:plant},
there exists a controller \eqref{eq:ctr}
such that
$D_c(z)$ is an integer monic polynomial and
the closed-loop system is stable.
Furthermore, such a controller can be designed from the outputs of Algorithm~\ref{alg:stabil} as $D_c(z)=\alpha(z)$ and $N_c(z)=-\beta(z)$.
\end{theorem}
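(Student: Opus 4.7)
The plan is to verify that Algorithm~\ref{alg:stabil} returns a triple $(\alpha(z),\beta(z),\gamma(z))$ that solves Problem~\ref{prob:stabilization}, since the construction of a stabilizing controller with integer coefficients from such a solution has already been established just after the problem statement. First I would check that the initialization in Steps~2--4 produces $(\alpha^{\ini}(z),\beta^{\ini}(z),\gamma^{\ini}(z))$ satisfying \eqref{eq:stabilization}, (S2), and (S3): the choice of $\gamma^{\ini}(z)$ and the mapping~\eqref{eq:alpha ini}, through Definition~\ref{def:map}, yield $\deg(\beta^{\ini}(z))<\deg(\alpha^{\ini}(z))=n$, and $\gamma^{\ini}(z)$ is Schur stable and monic by construction. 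Since $\gamma^{\ini}(z)$ is chosen coprime to $N_p(z)$, $\alpha^{\ini}(z)$ is automatically coprime to $N_p(z)$, for any common factor would propagate through \eqref{eq:stabilization} into $\gamma^{\ini}(z)$; hence $x_0$ is admissible as an initial state of~\eqref{eq:dyn} by Lemma~\ref{lem:invert}.

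Next I would argue by induction that the loop in Steps~\ref{step:while start}--\ref{step:while end} preserves \eqref{eq:stabilization}, (S2), and (S3), while keeping $\alpha(z)$ in the form~\eqref{eq:alpha} of degree $N+n$. At iteration $k$ the input $u_k$ from~\eqref{eq:u} satisfies $\lVert u_k\rVert<1$, so $r(z)=p_{u_k}(z)$ is Schur stable by Rouché's theorem as recorded just before Subsection~\ref{subsec:dyn}. Proposition~\ref{prop:a update} then guarantees a (unique) $w(z)$ with $\deg(w(z))<N+n=\deg(\alpha(z))$, so that the update~\eqref{eq:pm} yields $\alpha^+(z)=z^{N+n}p_{x_{k+1}}(z)$ whose coefficient vector is exactly $x_{k+1}=x_k+\Delta(x_k)u_k$, matching~\eqref{eq:dyn}. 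Since $r(z)$ is monic and Schur stable and $\gamma(z)$ already satisfies (S2), so does $\gamma^+(z)=r(z)\gamma(z)$; the degree bound on $w(z)$ secures (S3) by the footnote following~\eqref{eq:deg w}.

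Termination is the key remaining issue, and it is precisely what Propositions~\ref{prop:xstar} and~\ref{prop:finite} provide: Proposition~\ref{prop:xstar} supplies an integer $x^\star\in\Z^n$ satisfying~\eqref{eq:xstar}, and Proposition~\ref{prop:finite} guarantees that, with the input~\eqref{eq:u}, the state reaches $x^\star$ in at most $\bar T$ steps while staying in the segment $\Omega$ on which $\Delta(x_k)$ is invertible, so that \eqref{eq:u} is well-defined at every iteration. Upon exit from the loop, $\alpha(z)=z^Np_{x^\star}(z)$ is an integer monic polynomial, giving (S1); $\beta(z)$ is then computed in Step~12 so that~\eqref{eq:stabilization} holds by construction, and the degree bound inherited from the preserved (S3) carries over. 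Thus Algorithm~\ref{alg:stabil} outputs a valid solution to Problem~\ref{prob:stabilization}, and the controller $D_c(z)=\alpha(z)$, $N_c(z)=-\beta(z)$ has an integer monic denominator and stabilizes the closed loop.

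I expect the main obstacle to be the bookkeeping that holds the invariants together across iterations, especially the claim that the state $x_k$ always corresponds to a polynomial $a(z)$ coprime to $N_p(z)$, so that $\Delta(x_k)^{-1}$ in~\eqref{eq:u} exists. This is resolved by combining Lemma~\ref{lem:invert} with the observation that $x_k\in\Omega\subset\Phi$ for every $k$ (a convexity argument used in the proof of Proposition~\ref{prop:finite}), and that the implicit $w(z)$ in each step factors as $z^Nb(z)$ by~\eqref{eq:w}, so its degree never exceeds $N+n-1$ as required by~\eqref{eq:deg w}. Once these invariants are in place, the theorem follows directly.
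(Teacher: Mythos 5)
Your proposal is correct and follows essentially the same route as the paper's proof: reduce to showing Algorithm~\ref{alg:stabil} solves Problem~\ref{prob:stabilization}, verify the initialization and the coprimality of $\alpha^\ini(z)$ with $N_p(z)$, invoke Propositions~\ref{prop:xstar} and~\ref{prop:finite} for well-definedness and finite termination, and establish by induction that the loop preserves \eqref{eq:stabilization}, (S2), and (S3) so that the final $\beta(z)$ from Step~12 inherits the degree bound. The paper phrases the loop invariant as $p_{x_k}(z)=\calF(z^N D_p(z),\gamma(z))$ and verifies it through a chain of identities for $\calF$, whereas you carry the triple update \eqref{eq:pm} together with the degree bound \eqref{eq:deg w} and Proposition~\ref{prop:a update}; these are equivalent formulations of the same argument.
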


\begin{proof}
It suffices to show that Algorithm~\ref{alg:stabil} returns a solution to Problem~\ref{prob:stabilization}.
As $\gamma^\ini(z)$ is coprime to $N_p(z)$, so is $\alpha^\ini(z)$, and hence the assumption of Proposition~\ref{prop:finite} holds.
By Proposition~\ref{prop:finite},
Steps~6--10 of Algorithm~\ref{alg:stabil} are repeated only a finite number of times, achieving $x_k=x^\star$.
By construction, (S1) and (S2) of Problem~\ref{prob:stabilization} hold.

We prove by induction that after each iteration,
\begin{equation}\label{eq:thmpf}
	p_{x_k}(z)=\calF(z^ND_p(z),\gamma(z)),
\end{equation}
since this implies by Definition~\ref{def:map} that \eqref{eq:stabilization} and (S3) are satisfied after Step~12.
Indeed, \eqref{eq:thmpf} holds when $k=0$ and $\gamma(z)=\gamma^\ini(z)$ by \eqref{eq:alpha ini} and Step~4.
We show that if \eqref{eq:thmpf} holds, then
\begin{equation*}
	p_{x_{k+1}}(z)=\calF(z^{N+n}D_p(z),p_{u_k}(z)\gamma(z)),
\end{equation*}
i.e., \eqref{eq:thmpf} written with respect to the next iteration.
From \eqref{eq:thmpf},
\begin{equation}\label{eq:thmpf2}
	z^ND_p(z)p_{x_k}(z)p_{u_k}(z)+\eta(z)N_p(z)=p_{u_k}(z)\gamma(z),
\end{equation}
where $\eta(z)$ is a polynomial of degree less than $N+2n$.
By Proposition~\ref{prop:a update},
\begin{align*}
	p_{x_{k+1}}(z)&=\calF(z^{N+n},z^Np_{x_k}(z)p_{u_k}(z))\\
	&=\calF(z^{N+n}D_p(z),z^ND_p(z)p_{x_k}(z)p_{u_k}(z))\\
	&=\calF(z^{N+n}D_p(z),p_{u_k}(z)\gamma(z)-\eta(z)N_p(z))\\
	&=\calF(z^{N+n}D_p(z),p_{u_k}(z)\gamma(z)),
\end{align*}
where
the second and the last equality follow directly from Definition~\ref{def:map} and the third equality comes from \eqref{eq:thmpf2}.
\end{proof}

\begin{remark}
The degree of $\alpha(z)$ from Algorithm~\ref{alg:stabil} is less than or equal to
$n\bar{T}+n$, which is determined by the choice of $x_0$, $x^\star$, and $\mu$.
Note that the initial state $x_0$ is determined by the choice of $\gamma^\ini(z)$.
In practice, the destination point $x^\star$ can be found by simply investigating integer vectors nearby $x_0$, with checking if the condition \eqref{eq:xstar} holds.
\end{remark}

\begin{remark}
If the condition \eqref{eq:xstar} holds for $x^\star=0_n$,
then the final $\alpha(z)$ becomes a monomial, and thus every pole of the resulting controller is at the origin. This implies that the plant \eqref{eq:plant} is strongly stabilizable \cite{ss}, i.e., stabilizable by a stable controller.
In fact, a controller with integer coefficients is stable only when all of its poles are at the origin \cite{stableCtr}.
\end{remark}

\subsection{Numerical example}\label{subsec:stabil ex}

This subsection provides an illustration of applying Algorithm~\ref{alg:stabil} to a linearized inverted pendulum \cite{franklin}, written by
\begin{equation}\label{eq:invpen}
\begin{aligned}
	\left(I+ml^2\right)\ddot{\phi}(t)-mgl\phi(t)&=ml\ddot{x}(t),\\
	\left(M+m\right)\ddot{x}(t)+b\dot{x}(t)-ml\ddot{\phi}(t)&=u(t),\,\,
	y(t)=x(t),
\end{aligned}
\end{equation}
where $u(t)\in\R$ is the input, $y(t)\in\R$ is the output, $M=0.5$, $m=0.2$, $b=0.1$, $l=0.2$, $I=0.006$, and $g=9.8$.
The plant \eqref{eq:plant} is obtained by discretizing \eqref{eq:invpen} under the sampling period $50$ ms, as
\begin{equation}\label{eq:simul plant}
\begin{aligned}
	\!\!\!D_p(z)&\!=\!z^4\!-\!4.0757z^3\!+\!6.1423z^2\!-\!4.0581z\!+\!0.9915,\\
	\!\!\!N_p(z)&\!=\!0.0021z^3\!-\!0.0023z^2\!-\!0.0023z\!+\!0.0021.
\end{aligned}
\end{equation}
At Step~2 of Algorithm~\ref{alg:stabil}, $\gamma^\ini(z)$ is chosen to have 
$-0.2616$, $0.3728$, $0.6769\pm0.6490i$, $0.9168\pm 0.1990i$, and $0.9650\pm 0.1i$ as its roots.
At Step~5, we set $\mu=0.99$ and $x^\star=\lceil x_0\rfloor$.
In this example, the control input \eqref{eq:u} achieves $x_k=x^\star$ when $k=1$.
Accordingly, we obtain the controller \eqref{eq:ctr}
as
\begin{align}
D_c(z)\!&=\!z^4\left(z^4-z^3-13z^2-4z+10\right),\label{eq:ex_ctr}\\
N_c(z)\!&=\!10^3\left(-6.4046z^7+17.154z^6-14.891z^5+3.8949z^4\right.\notag\\
&\,\,\left.+0.27228z^3\right)-6.0466z^2-23.6399z+4.7839,\notag
\end{align}
which yields a stable closed-loop system; the maximum absolute value of the roots of \eqref{eq:cl} is $0.9701$.
The code for this example is uploaded as \verb|integer_ctr/stabilization.m| at \url{https://github.com/CDSL-EncryptedControl/CDSL}.

\begin{remark}
In general, Algorithm~\ref{alg:stabil} returns a controller of relatively higher order than typical stabilizing controllers, which may lead to increased computational overhead when implemented over encrypted data in a naive manner.
To alleviate this burden, existing ``packing'' methods can be employed, as in \cite{RGSW,RCF}.
For example,
the same controller \eqref{eq:ex_ctr} was implemented in \cite{RCF} with a conservative encryption security level, and the resulting computation time per time step was below $6$ ms.
\end{remark}

\section{Application to Conversion Problem}\label{sec:conv}

This section addresses the conversion problem, where a pre-designed controller is given and the objective is to design an alternative controller
having integer coefficients that preserves the performance of the pre-designed controller in a certain sense.
As in the previous result \cite{CDC23},
we consider a reference signal injected to the controller as an input, as depicted in Fig.~\ref{fig:cl},
and aim to preserve the transfer function of the closed-loop system from the reference to the plant output exactly.

\begin{figure}
\centering
\begin{subfigure}[b]{0.47\linewidth}
	\centering
	\resizebox{\linewidth}{!}{\begin{tikzpicture}[node distance=1.5cm,>=latex']
			\node[rectangle,draw=blue, minimum height = 2.2cm, minimum width = 4.1cm, very thick, dashed, label = 260:{\large $ T_{ry}(z)$}] (cl) at (2.6,-0.3) {};
			\node[coordinate, name = reference]{};
			\node[block, right of = reference, minimum width = 1cm, name = controller, thick]{\large $C$};
			\node[coordinate, right of = controller, node distance = 0.5cm, name = input] {};
			\node[block, right of = input, minimum width = 1cm, name = plant, thick]{\large $P$};
			\node[coordinate, right of = plant, node distance = 1.8cm, name = output]{};
			\node[coordinate, right of = plant, node distance = 1cm, name = outputJunc]{};
			\node[coordinate, below of = outputJunc, node distance = 1.2cm, name = temp]{};
			\draw[->, thick] (reference) -- (controller) node[pos = 0.3, above]{\large $r$};
			\draw[-, thick] (controller) -- (input) node[]{};
			\draw[->, thick] (input) -- (plant) node[pos = 0.5, above]{\large $u$};
			\draw[->, thick] (plant) -- (output) node[pos = 0.7, above]{\large $y$};
			\draw[-, thick] (outputJunc) -- (temp);
			\draw[->, thick] (temp) -| (controller) node[]{};
	\end{tikzpicture}}
	\caption{Before conversion}
\end{subfigure}
\hfill
\begin{subfigure}[b]{0.47\linewidth}
	\centering
	\resizebox{\linewidth}{!}{\begin{tikzpicture}[node distance=1.5cm,>=latex']
			\node[rectangle,draw=red, minimum height = 2.2cm, minimum width = 4.1cm, very thick, dashed, label = 272:{\large $ T_{ry}^\prime(z)=T_{ry}(z)$}] (cl) at (2.6,-0.3) {};
			\node[coordinate, name = reference]{};
			\node[block, right of = reference, minimum width = 1cm, name = controller, thick]{\large $C^\prime$};
			\node[coordinate, right of = controller, node distance = 0.5cm, name = input]{};
			\node[block, right of = input, minimum width = 1cm, name = plant, thick]{\large $P$};
			\node[coordinate, right of = plant, node distance = 1.8cm, name = output]{};
			\node[coordinate, right of = plant, node distance = 1cm, name = outputJunc]{};
			\node[coordinate, below of = outputJunc, node distance = 1.2cm, name = temp]{};
			\draw[->, thick] (reference) -- (controller) node[pos = 0.3, above]{\large $r$};
			\draw[-, thick] (controller) -- (input) node[]{};
			\draw[->, thick] (input) -- (plant) node[pos = 0.5, above]{\large $u$};
			\draw[->, thick] (plant) -- (output) node[pos = 0.7, above]{\large $y$};
			\draw[-, thick] (outputJunc) -- (temp);
			\draw[->, thick] (temp) -| (controller) node[]{};
	\end{tikzpicture}}
	\caption{After conversion}
\end{subfigure}
\caption{Conversion of a pre-designed controller $C$ to a new controller $C^\prime$ having integer coefficients.}
\label{fig:cl}
\end{figure}

Throughout this section, $C(z)$ denotes the proper transfer function matrix of the pre-designed controller, written by
\begin{equation}\label{eq:pre ctr}
C(z)=\frac{1}{D_c(z)}\begin{bmatrix}
	N_{c,y}(z) & N_{c,r}(z)
\end{bmatrix},
\end{equation}
where $D_c(z)$ is a monic polynomial and the first and the second inputs are the plant output and the reference, respectively.
Let this pre-designed controller stabilize the given plant \eqref{eq:plant}.
Then, the closed-loop transfer function that we aim to preserve is
\begin{equation}\label{eq:cltf}
T_{ry}(z)=\frac{N_{c,r}(z)N_p(z)}{D_p(z)D_c(z)-N_p(z)N_{c,y}(z)}.
\end{equation}

The problem is to design a new controller
\begin{equation}\label{eq:new ctr}
C^\prime(z)=\frac{1}{D_c^\prime(z)}\begin{bmatrix}
	N_{c,y}^\prime(z) & N_{c,r}^\prime(z)
\end{bmatrix}
\end{equation}
such that i) $D_c^\prime(z)$ is an integer monic polynomial, ii) the closed-loop system of \eqref{eq:plant} and \eqref{eq:new ctr} is internally stable, and iii) the transfer function from the reference to the plant output, denoted by $T_{ry}^\prime(z)$, is equal to \eqref{eq:cltf}.

A method to design such a controller is to first solve the following subproblem \cite{CDC23}, which is similar to Problem~\ref{prob:stabilization}.
\begin{problem}\label{prob:conversion}
Find polynomials $\alpha(z)$, $\beta(z)$, and $\gamma(z)$ such that
\begin{equation}\label{eq:conversion}
	\alpha(z)D_c(z)+\beta(z)N_p(z)=\gamma(z)
\end{equation}
and satisfy the followings:
\begin{enumerate}
	\item[(C1)] $\alpha(z)$ is a Schur stable monic polynomial.
	\item[(C2)] $\gamma(z)$ is an integer monic polynomial.
	\item[(C3)] $\deg(\beta(z))<\deg(\gamma(z))-n$.
\end{enumerate}
\end{problem}

Then, a new controller \eqref{eq:new ctr} can be designed from a solution to Problem~\ref{prob:conversion} as in \cite{CDC23}, as
\begin{equation}\label{eq:new design}
\begin{aligned}
	D_c^\prime(z)&=\gamma(z),\quad N_{c,r}^\prime(z)=\alpha(z)N_{c,r}(z),\\
	N_{c,y}^\prime(z)&=\beta(z)D_p(z)+\alpha(z)N_{c,y}(z).
\end{aligned}
\end{equation}
This ensures that the new controller has integer coefficients by (C2), achieves $T_{ry}(z)=T_{ry}^\prime(z)$ by \eqref{eq:conversion}, keeps the internal stability by (C1), and is proper by (C3).
However, the previous result \cite{CDC23} solves Problem~\ref{prob:conversion} only when the numerator $N_p(z)$ of the plant \eqref{eq:plant} is a constant.
In contrast,
we propose a method to solve Problem~\ref{prob:conversion} in general, given that the polynomials $D_c(z)$ and $N_p(z)$ are coprime.

As seen from the resemblance of Problem~\ref{prob:conversion} to Problem~\ref{alg:stabil},
the proposed method is based on the principle of Section~\ref{sec:stabil}.
This time, $(\alpha(z),\beta(z),\gamma(z))$ is iteratively updated so that $\gamma(z)$ eventually becomes an integer polynomial.
Analogously to \eqref{eq:pm},
we use the fact that given $(\alpha(z),\beta(z),\gamma(z))$ satisfying \eqref{eq:conversion}, the next $(\alpha^+(z),\beta^+(z),\gamma^+(z))$ can be constructed as
\begin{multline*}
\underbrace{a(z)\alpha(z)}_{=\alpha^+(z)}D_c(z)+\underbrace{\left(a(z)\beta(z)+w(z)\right)}_{=\beta^+(z)}N_p(z)\\
=\underbrace{a(z)\gamma(z)+w(z)N_p(z)}_{=\gamma^+(z)}
\end{multline*}
with some polynomials $a(z)$ and $w(z)$, so that \eqref{eq:conversion} is met.

\begin{algorithm}[t]
\caption{Solving Problem~\ref{prob:conversion}.}\label{alg:conv}
\begin{algorithmic}[1]
	\renewcommand{\algorithmicrequire}{\textbf{Input:}}
	\Require $D_c(z),\,N_p(z),\,n$.
	\State Choose a Schur stable monic polynomial $\alpha(z)$ coprime to $N_p(z)$ such that
	$\deg(\alpha(z))\geq n-\deg(D_c(z))$.
	\State $N\gets \deg(\alpha(z)D_c(z))-n$.
	\State $r(z)\gets \calF(z^N,\alpha(z)D_c(z))$.
	\State
	Let $x_0\in\R^n$ such that $p_{x_0}(z)=r(z)$. $k\gets 0$.
	\State Perform Step~4 of Algorithm~\ref{alg:stabil}.
	\While{$x_k\neq x^\star$}
	\State Perform Steps~6 and 7 of Algorithm~\ref{alg:stabil}.
	\State $\alpha(z)\gets p_{u_k}(z)\alpha(z)$, $N\gets N+n$, $k\gets k+1$.
	\EndWhile
	\State $\gamma(z)\gets z^Np_{x^\star}(z)$.
	\State $\beta(z)\gets (\gamma(z)-\alpha(z)D_c(z))/N_p(z)$.
	\renewcommand{\algorithmicrequire}{\textbf{Output:}}
	\Require $\alpha(z),\,\beta(z),\,\gamma(z)$.
\end{algorithmic}
\end{algorithm}

We provide the complete method as Algorithm~\ref{alg:conv}.
From the results of Section~\ref{sec:stabil},
it can be verified that Algorithm~\ref{alg:conv} returns a solution to Problem~\ref{prob:conversion}, leading to the following theorem.

\begin{theorem}
Given a plant \eqref{eq:plant} and a pre-designed controller \eqref{eq:pre ctr}, suppose that $D_c(z)$ and $N_p(z)$ are coprime.
Then, there exists a controller \eqref{eq:new ctr} such that the followings hold:
\begin{enumerate}[leftmargin=*]
	\item $D_c^\prime(z)$ is an integer monic polynomial.
	\item
	$
	T_{ry}^\prime(z)=T_{ry}(z)$.
	\item The closed-loop system of \eqref{eq:plant} and \eqref{eq:new ctr} is internally stable.
\end{enumerate}
Furthermore, such a controller can be constructed from the outputs of Algorithm~\ref{alg:conv}, according to \eqref{eq:new design}.
\end{theorem}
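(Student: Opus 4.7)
The plan is to prove the theorem in two stages. First, I would verify that Algorithm~\ref{alg:conv} always returns a triple $(\alpha(z),\beta(z),\gamma(z))$ that solves Problem~\ref{prob:conversion}. Second, I would derive the three listed properties of the constructed controller \eqref{eq:new ctr} from such a solution via the formulas \eqref{eq:new design}.

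For the first stage I would adapt the closing argument of the proof of Theorem~\ref{thm:stabil}, noting that here the roles of the iterated polynomials are swapped: it is now $\gamma(z)$, rather than $\alpha(z)$, that must eventually become an integer monic polynomial, while $\alpha(z)$ plays the role previously held by $\gamma(z)$ and must stay Schur stable. At initialization, $\alpha(z)$ is chosen Schur stable and coprime to $N_p(z)$, and the condition $\deg(\alpha(z))\ge n-\deg(D_c(z))$ forces $N\ge 0$, so $r(z):=\calF(z^N,\alpha(z)D_c(z))$ is a well-defined monic polynomial of degree $n$. I would check that $r(z)$ is coprime to $N_p(z)$, which is inherited from the coprimality of $\alpha(z)D_c(z)$ and $N_p(z)$, so that Proposition~\ref{prop:finite} applies and the loop terminates with $x_k=x^\star\in\Z^n$ after finitely many steps, each using a Schur stable monic factor $p_{u_k}(z)$ with $\lVert u_k\rVert<1$. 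The final $\alpha(z)$ is the initial $\alpha(z)$ multiplied by all these Schur monic factors, giving (C1), and $\gamma(z)=z^N p_{x^\star}(z)$ is integer monic, giving (C2). The remaining conditions are established by an induction, analogous to the one at the end of the proof of Theorem~\ref{thm:stabil}, showing that throughout the loop the invariant
\begin{equation*}
\alpha(z)D_c(z)=z^N p_{x_k}(z)+s_k(z)N_p(z),\qquad \deg(s_k)<N,
\end{equation*}
is preserved; Proposition~\ref{prop:a update} supplies the inductive step, since multiplying the current $\alpha(z)$ by $p_{u_k}(z)$ and adding a suitable correction $w(z)N_p(z)$ is exactly the update rule that maps $x_k$ to $x_{k+1}$. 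At termination this invariant yields \eqref{eq:conversion} with $\beta(z)=-s_k(z)$ and $\deg(\beta)<N=\deg(\gamma)-n$, which is (C3).

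For the second stage, property (1) is immediate from $D_c'(z)=\gamma(z)$ and (C2). Substituting \eqref{eq:new design} into the new closed-loop characteristic polynomial and invoking \eqref{eq:conversion} gives
\begin{equation*}
D_p(z)D_c'(z)-N_p(z)N_{c,y}'(z)=\alpha(z)\bigl(D_p(z)D_c(z)-N_p(z)N_{c,y}(z)\bigr),
\end{equation*}
so the factor $\alpha(z)$ cancels with the one appearing in $N_{c,r}'(z)=\alpha(z)N_{c,r}(z)$, yielding $T_{ry}'(z)=T_{ry}(z)$ and establishing property (2). Property (3) then follows because the displayed characteristic polynomial is Schur stable, being the product of $\alpha(z)$, Schur by (C1), and $D_p D_c - N_p N_{c,y}$, Schur by assumption that the pre-designed controller stabilizes the plant; properness of $C'(z)$ is confirmed by combining the degree bound in (C3) with properness of $C(z)$. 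The main obstacle I anticipate lies in the first stage, specifically the bookkeeping needed to ensure that the remainder $s_k(z)$ in the inductive invariant strictly stays below degree $N$ even as both $\alpha(z)$ and $N$ grow with each iteration; once this is set up carefully through Proposition~\ref{prop:a update} and Definition~\ref{def:map}, the induction itself is mechanical.
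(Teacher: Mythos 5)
Your proposal is correct and follows essentially the same route as the paper: showing that Algorithm~\ref{alg:conv} solves Problem~\ref{prob:conversion} by establishing coprimality of $p_{x_0}(z)$ with $N_p(z)$, invoking Proposition~\ref{prop:finite} for finite termination, and maintaining the invariant $p_{x_k}(z)=\calF(z^N,\alpha(z)D_c(z))$ to obtain \eqref{eq:conversion} and (C3), with the three controller properties then read off from \eqref{eq:new design} exactly as the paper argues in the text preceding the theorem. Your write-up is in fact more explicit than the paper's (which defers most details to the proof of Theorem~\ref{thm:stabil}), and your handling of the coprimality of $r(z)$ via $\alpha(z)D_c(z)$ is the precise version of what the paper states loosely.
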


\begin{proof}
It suffices to show that Algorithm~\ref{alg:conv} solves Problem~\ref{prob:conversion}.
After Step~4, $p_{x_0}(z)$ and $N_p(z)$ are coprime since $\alpha(z)$ and $N_p(z)$ are coprime.
Then, by Proposition~\ref{prop:finite}, Steps~7 and 8 are repeated only a finite number of times.
Thus, (C1) and (C2) hold by construction.
It is derived that (C3) and \eqref{eq:conversion} hold
by showing that $p_{x_k}(z)=\calF(z^N,\alpha(z)D_c(z))$ after each iteration.
The rest of the proof is analogous to that of Theorem~\ref{thm:stabil}.
\end{proof}

\subsection{Numerical example}

We demonstrate the proposed conversion method through a numerical example.
The plant \eqref{eq:plant} is the linearized inverted pendulum model \eqref{eq:simul plant}, and
the controller \eqref{eq:pre ctr} is designed as
\begin{equation}\label{eq:simul pre}
\begin{aligned}
	D_c(z)&=z^5-2.8826z^4+0.1067z^3+0.4848z^2\\
	&\,\,+3.8324z-2.5413,\\
	N_{c,y}(z)&=10^3\left(-1.556z^4+5.8219z^3-8.1324z^2\right.\\&\,\,+\left.5.0230z-1.1566\right),\\
	N_{c,r}(z)&=-0.02z^4+0.0566z^3-0.0584z^2\\
	&\,\,+0.0258z-0.0041.
\end{aligned}
\end{equation}
To convert \eqref{eq:simul pre}, we apply Algorithm~\ref{alg:conv}.
At Step~1, the initial $\alpha(z)$ is chosen
to have roots at $-0.7493$, $-0.1861$, $-0.2412\pm 0.8757i$, and $-0.1373\pm 0.9794i$.
By setting $\mu=0.99$ and $x^\star=\lceil x_0\rfloor$ at Step~5, $x_k$ becomes $x^\star$ within $4$ time steps.
As a result, we obtain the converted controller \eqref{eq:new design} from the outputs of Algorithm~\ref{alg:conv}, having integer coefficients as
\begin{equation*}
D_c^\prime(z)=z^{23}\left(z^4-z^3-4z^2-2z+4\right).
\end{equation*}
We provide the code as \verb|integer_ctr/conversion.m| at \url{https://github.com/CDSL-EncryptedControl/CDSL}.

\begin{figure}
\centering
\begin{subfigure}[b]{0.49\linewidth}
	\centering
	\input{output}
	\caption{Cart position}
	\label{subfig:output}
\end{subfigure}
\hfill
\begin{subfigure}[b]{0.49\linewidth}
	\centering
	\input{angle}
	\caption{Angle of the pendulum}
	\label{subfig:angle}
\end{subfigure}
\caption{Performance of the pre-designed controller $C$ of \eqref{eq:simul pre} and the converted controller $C^\prime$.}
\label{fig:conversion}
\end{figure}

Fig.~\ref{fig:conversion} compares the performance of this converted controller with that of the pre-designed controller \eqref{eq:simul pre}, when the initial conditions of the plant \eqref{eq:invpen} are $x(0)=\dot{x}(0)=0$, $\phi(0)=0.01$, and $\dot{\phi}(0)=-0.1$.
We set the initial states of both controllers as zeros, and the reference signal as $r(t)\equiv 2$.
It can be observed from Fig.~\ref{fig:conversion} that although the conversion preserves the steady-state responses, the transient responses are modified in general;
this is caused by the pole-zero cancellations occurred by $\alpha(z)$ within the closed-loop transfer function $T_{ry}^\prime(z)=T_{ry}(z)\frac{\alpha(z)}{\alpha(z)}$.

\section{Conclusion}\label{sec:conclude}

In this paper, we have shown that it is possible to design a controller consisting of integer coefficients that stabilizes a given linear plant.
An algorithm to design such controller is provided in a constructive way, which aids those who are interested in implementing encrypted control systems.
Moreover, we have proposed a method to convert a pre-designed controller to have integer coefficients,
which yields the same transfer function of the closed-loop system as before.
As this process can alter the transient responses, further research can be done by taking relevant performance measures into account.

\bibliographystyle{IEEEtran}
\bibliography{ref}

\end{document}